\DeclareSymbolFont{cyrletters}{OT2}{wncyr}{m}{n}
\DeclareMathSymbol{\Sha}{\mathalpha}{cyrletters}{"58}
\newtheorem{theorem}{Theorem}
\newtheorem{lemma}[theorem]{Lemma}
\newtheorem{prop}[theorem]{Proposition}
\theoremstyle{definition}
\newtheorem{definition}[theorem]{Definition}
\newtheorem{example}[theorem]{Example}
\theoremstyle{remark}
\newtheorem{remark}[theorem]{Remark}
\begin{document}
 
\newcommand{\m}{2r^3-\sqrt{3}r^{2.5}}
\newcommand{\mm}{r^3-\frac{\sqrt{3}}{2}r^{2.5}}
\newcommand{\nth}{n^\text{th}}
\newcommand{\rth}{r^\text{th}}
\newcommand{\ff}{\mathbb{F}}
\newcommand{\ffq}{\mathbb{F}_q}
\newcommand{\ffb}{\mathbb{F}_2}
\newcommand{\ffqm}{\mathbb{F}_{q^m}}
\newcommand{\fbm}{\mathbb{F}_{2^m}}
\newcommand{\kg}{G_2^{\otimes n}}
\newcommand{\kgg}{G^{\otimes n}}
\newcommand{\yy}{y_0^{N-1}}
\newcommand{\uu}{u_0^{N-1}}
\newcommand{\huu}{u_0^{i-1}}
\newcommand{\perr}{P_\text{err}}
\newcommand{\bb}{\mathcal{B}_{i,N}}
\newcommand{\aah}{\mathcal{A}_{i,N}}
\newcommand{\fii}{\mathcal{F}}
\newcommand{\sfrac}[2]{#1/#2}
\newcommand{\smm}{r^3-\sfrac{(\sqrt{3}}{2})r^{2.5}}
\newcommand{\sas}{\text{ Sas\u{o}glu }}
\newcommand{\valdut}{\text{V\u{a}ldut}}
\newcommand{\wt}{\text{wt}}
\newcommand{\argmax}{\operatorname{argmax}}
\newcommand{\divi}{\operatorname{Div}}
\newcommand{\Gali}{\operatorname{Gal}}
\newcommand{\resi}{\operatorname{res}}
\newcommand{\Auti}{\operatorname{Aut}}
\newcommand{\supi}{\operatorname{Supp}}
\newcommand{\phfi}{\operatorname{PHF}}
\newcommand{\prini}{\operatorname{Princ}}
\newcommand{\mati}{\operatorname{Mat}}
\newcommand{\erri}{\operatorname{err}}
\newcommand{\wti}{\operatorname{wt}}
\newcommand{\Tri}{\operatorname{Tr}}
\newcommand{\ddiv}{\operatorname{div}}
\newcommand{\PGU}{\operatorname{PGU}}
\newcommand{\wtt}{\operatorname{wt}}

\newcommand{\rmm}{\ensuremath{\text{RM}}}
\newcommand{\rmu}{\ensuremath{\text{RM}}}
\newcommand{\C}[2]{\ensuremath{{{#1}\choose{#2}}}}



\newtheorem{lemm}{Lemma}  

\newtheorem{notation}[theorem]{Notation}   

\newtheorem{proposition}[theorem]{Proposition}  


\def\X{{\mathcal{X}}}
\def\Y{{\mathcal{Y}}}
\def\ppp{{\mathbb{P}}}
\def\aaa{{\mathbb{A}}}
\def\fff{{\mathbb{F}}}
\def\qqq{\mathbb{Q}}
\def\rrr{\mathbb{R}}
\def\ccc{\mathbb{C}}
\def\zzz{\mathbb{Z}}
\def\nnn{\mathbb{N}}
\def\Sz{{\rm{Sz}}}
\def\Div{{\rm{Div}}}
\def\Aut{{\rm{Aut}}}

\def\supp{{\rm{Supp}}}
\def\Stab{{\rm{Stab}}}
\def\eval{{\rm{eval}}}
\def\wt{{\rm{wt}}}
\def\pf{\noindent {\bf Proof}:\ }
\newcommand{\SAGE}{{\sf SAGE}\xspace}
\newcommand{\sage}{\SAGE}

\newcommand{\fqq}{\mathbb{F}_q}
\newcommand{\fqqq}{\mathbb{F}_{q^4}}
\newcommand{\fffq}{\mathbb{F}_{q^{3}}}
\newcommand{\projective}[1]{\mathbb{P}^#1}
\newcommand{\divv}{\operatorname{div}}
\newcommand{\gf}{y^az^bh_1^ch_2^d}
\newcommand{\bcb}{\newline}

\newcommand{\fe}{\mathbb{F}_{q^4}}
\newcommand{\cod}[2]{C_{\mathcal{L}}(#1,#2)}
\newcommand{\valu}[2]{\nu_{#1}\left (#2 \right )}
\newcommand{\res}[2]{\operatorname{res}_{#1}(#2)}
\newcommand{\cf}{\fe(y)}
\newcommand{\place}[1]{\mathbb{P}_{#1}}
\newcommand{\pif}{P_\infty}

\newcommand{\pff}{\mathbb{P}_F}

\newcommand{\D}{\displaystyle}

\newcommand{\Gtwo}{G_{2}}
\newcommand{\Atwo}{A_{2}}
\newcommand{\Btwo}{B_{2}}

\newcommand{\GG}{^{2}G_{2}}
\newcommand{\AAA}{^{2}A_{2}}
\newcommand{\BB}{^{2}B_{2}}
\newcommand{\GF}[1]{\mathbb{F}_{#1}}
\newcommand{\gfq}{\mathbb{F}_{q}}
\newcommand{\fgq}{\mathbb{F}_{q}}
\newcommand{\fgqm}{\mathbb{F}_{q^m}}
\newcommand{\fg}[1]{\mathbb{F}_{#1}}
\newcommand{\fgqc}{\overline{\mathbb{F}_{q}}}

\newcommand{\JJr}{\mathcal{J}_{\text{R}}}

\newcommand{\NN}{\mathbb{N}}
\newcommand{\ZZ}{\mathbb{Z}}
\newcommand{\LL}{\mathcal{L}}

\newcommand{\xh}{X_{\text{H}}}
\newcommand{\xs}{X_{\text{S}}}
\newcommand{\xr}{X_{\text{R}}}
\newcommand{\fh}{F_{\text{H}}}
\newcommand{\fs}{F_{\text{S}}}
\newcommand{\fr}{F_{\text{R}}}
\newcommand{\gh}{g_\text{H}}
\newcommand{\gs}{g_\text{S}}
\newcommand{\gr}{g_\text{R}}

\newcommand{\reex}{\mathcal{X}}
\newcommand{\ree}{X_{\text{R}}}
\newcommand{\dimens}{\text{dim}}

\newcommand{\Wedge}{\land^2}
\newcommand{\OO}{\mathcal{O}}
\newcommand{\im}{\text{Im}}
\newcommand{\frrq}{\text{Fr}_q}

\newcommand{\pp}{\mathbb{P}}
\newcommand{\PP}{\mathbb{P}}
\newcommand{\ppfgqc}{\mathbb{P}^{13}(\fgqc)}
\newcommand{\ppfr}{\mathbb{P}_{\fr}}
\newcommand{\ppfx}{\mathbb{P}_{\fgq(x)}}

\newcommand{\Da}{\mathcal{D}}
\newcommand{\psii}{\psi_{\alpha \beta \gamma \delta}}
\newcommand{\Mat}[3]{\text{Mat}(#1,#2,#3)}

\newcommand{\gl}{\text{GL}(n,k)}

\newcommand{\fieldextension}{/}
\newcommand{\placeextension}{|}

\newcommand{\tps}{perfect hash families }
\newcommand{\tus}{$\epsilon$-almost strongly universal hash families}
\newcommand{\HH}{\mathcal{H}}
\newcommand{\PHF}{\text{PHF}(H;n,m,w)}
\newcommand{\PPHF}[4]{\text{PHF}\left ({#1;#2,#3,#4}\right )}
\newcommand{\RR}{\mathbb{R}}

\newcommand{\mat}[3]{\text{Mat}({#1},{#2},{#3})}
\newcommand{\xx}{\mathcal{X}}
\newcommand{\yyy}{\mathcal{Y}}

\newcommand \mapsfrom{\mathrel{\reflectbox{\ensuremath{\mapsto}}}}
\pagenumbering{arabic}

\title{Using concatenated algebraic geometry codes in channel polarization}

\author[1]{Abdulla Eid and Iwan Duursma\thanks{eid1@illinois.edu, duursma@math.illlinois.edu}}
\maketitle




\begin{abstract}
Polar codes were introduced by Arikan \cite{A} in 2008 and are the first family of error-correcting codes achieving the symmetric capacity of an arbitrary binary-input discrete memoryless channel under low complexity encoding and using an efficient successive cancellation decoding strategy. Recently, non-binary polar codes have been studied, in which one can use different algebraic geometry codes to achieve better error decoding probability. In this paper, we study the performance of binary polar codes that are obtained from non-binary algebraic geometry codes using concatenation. For binary polar codes (i.e. binary kernels) of a given length $n$, we compare numerically the use of short algebraic geometry codes over large fields versus long algebraic geometry codes over small fields. We find that for each $n$ there is an optimal choice. For binary kernels of size up to $n \leq 1,800$ a concatenated Reed-Solomon code outperforms other choices.
For larger kernel sizes concatenated Hermitian codes or Suzuki codes will
do better.

\end{abstract}


\section{Introduction to Channel Polarization}\label{sec4:1}
Polar codes were introduced by Arikan \cite{A} in 2008 and are the first family of error-correcting codes achieving the symmetric capacity of an arbitrary binary-input discrete memoryless channel under low complexity encoding and using an efficient successive cancellation decoding strategy. We introduce now the polar codes and the channel polarization phenomenon for a $q$-ary input symmetric discrete memoryless channel $W$ with input alphabet $\xx$ and output alphabet $\yyy$. We start first with some notations from \cite{USK},\cite{MT1'},\cite{MT1}.

\begin{notation}
Let $u_0^{N-1}$ be the vector $u=(u_0,\dots,u_{N-1}) \in \xx^{N}$ $(N \in \NN_{>0})$. For each $0 \leq i<j \leq N-1$, we denote by $u_i^j$ the subvector $(u_i,\dots,u_j)\in \xx^{j-i+1}$ of $u$. Moreover, if $\mathcal{F}:=\{f_0<\dots <f_t \}\subseteq \{0,1,\dots,N-1\}$ is a set of indices, then we denote by $u_{\mathcal{F}}$ the subvector $(u_{f_0},\dots,u_{f_t})\in \xx^{t+1}$ of $u$.
\end{notation}

Let $W:\xx \to \yyy$ be a $q$-ary input discrete memoryless channel with a uniform distribution on the input alphabet $\xx$. Let $\ell \geq 2$ be a positive integer and $g:\xx^\ell \to \xx^\ell$ be an $\fgq$-isomorphism linear map, which is called the \emph{kernel} map. Let $G \in \mati(\ell,\ell,\xx)$ be the matrix representing the map $g$ and $\kgg$ be the $n$th Kronecker product of $G$ of size $\ell^n \times \ell^n$.

\begin{definition}
For $i \geq 1$, the \emph{subchannel} $W^{(i)}:\xx \to \yyy^{\ell^n}\times \xx^{i}$ is defined as the channel on input $u_i$, output $(y_0^{\ell^n-1},u_0^{i-1})$, and probability distribution
\[
W^{(i)}(y_0^{\ell^n-1},u_0^{i-1}\,|\, u_i):= \frac{1}{q^{\ell^n}}\sum_{u_{i+1}^{\ell^n-1}\in \xx^{\ell^n-i+1}} W (y_0^{\ell^n-1}\,|\, u_0^{\ell^n-1}\kgg).
\]
\end{definition}
Let $\{B_i \}_{i\in \NN_{>0}}$ be a sequence of independent and identically distributed random variables defined over some probability space such that $B_i=k$ with probability $\sfrac{1}{\ell}$, for each $k=1,2,\dots,\ell$. Define the random process $\{W_n\,|\, n \in \NN\}$ recursively by
\begin{align*}
&W_0:=W,\\
&W_{n+1}:=W_n^{(B_{n+1})}.
\end{align*}

Recall that the symmetric capacity and the Bhattacharyya parameter of a $q$-ary input discrete memoryless channel $W:\xx \to \yyy$ is defined as
\[
I(W):= \frac{1}{q}\sum_{x \in \xx}\sum_{y \in \yyy}W(y \,|\, x) \log_{q} \frac{W(y \,|\, x)}{\frac{1}{q}\sum_{x' \in \xx} W(y \,|\, x')}. 
\]
and
\[
Z(W):=\frac{1}{q(q-1)} \sum_{\substack{x,x' \in \xx \\ x' \neq x}} \sum_{y \in \yyy} \sqrt{W(y \,|\, x)W(y \,|\, x')}.
\]
Set $I_n:=I(W_n)$ and $Z_n:=Z(W_n)$. Then,

\begin{lemma}\cite[Lemma 2]{STA} \cite[Lemma 9]{MT1'}
There exists a random variable $I_\infty$ such that $I_n \to I_\infty$ almost surely as $n \to \infty$.
\end{lemma}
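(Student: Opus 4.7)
The plan is to recognize the sequence $\{I_n\}$ as a bounded martingale with respect to the filtration generated by the random variables $B_1, B_2, \dots$, and then apply Doob's martingale convergence theorem. Concretely, let $\mathcal{F}_n := \sigma(B_1,\dots,B_n)$, so that $W_n$ (and hence $I_n = I(W_n)$) is $\mathcal{F}_n$-measurable.

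The main step is verifying the martingale identity $E[I_{n+1} \mid \mathcal{F}_n] = I_n$. By the recursive definition $W_{n+1} = W_n^{(B_{n+1})}$, and since $B_{n+1}$ is independent of $\mathcal{F}_n$ and uniform on $\{1,\dots,\ell\}$, we have
\[
E[I_{n+1} \mid \mathcal{F}_n] \;=\; \frac{1}{\ell}\sum_{k=1}^{\ell} I\bigl(W_n^{(k)}\bigr).
\]
So the martingale property reduces to the \emph{conservation of symmetric capacity} under the kernel transform, namely
\[
\sum_{k=1}^{\ell} I\bigl(W^{(k)}\bigr) \;=\; \ell\, I(W)
\]
for any $q$-ary input symmetric channel $W$. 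This is the key identity; I would establish it via the chain rule for mutual information applied to $(U_0^{\ell-1}; Y_0^{\ell-1})$ together with the fact that $g$ is an $\fgq$-linear isomorphism (so $U_0^{\ell-1}$ uniform forces $X_0^{\ell-1} = U_0^{\ell-1} G$ uniform, giving $I(U_0^{\ell-1}; Y_0^{\ell-1}) = I(X_0^{\ell-1}; Y_0^{\ell-1}) = \ell\, I(W)$), while by the chain rule the same quantity equals $\sum_{k} I(U_k; Y_0^{\ell-1}, U_0^{k-1}) = \sum_k I(W^{(k)})$.

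Once the martingale property is in hand, boundedness is immediate: the normalized symmetric capacity satisfies $0 \leq I(W) \leq 1$ (or $\leq \log q$ before normalization) for every channel, so $\{I_n\}$ is a uniformly bounded martingale. Doob's martingale convergence theorem then produces an $\mathcal{F}_\infty$-measurable random variable $I_\infty$ with $I_n \to I_\infty$ almost surely (and in $L^1$), which is exactly the claim.

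The main obstacle is the conservation identity; everything else (measurability, boundedness, invocation of Doob) is routine. If one is willing to cite the conservation of symmetric capacity from the literature on $q$-ary polar codes (e.g.\ \cite{STA,MT1',MT1}), the proof collapses to a one-line application of the martingale convergence theorem.
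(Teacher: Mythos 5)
Your proposal is correct and is precisely the argument behind the result as proved in the sources the paper cites (the paper itself only quotes the lemma from \cite{STA} and \cite{MT1'}): one shows $\{I_n\}$ is a $[0,1]$-bounded martingale for the filtration $\sigma(B_1,\dots,B_n)$ via the chain-rule/conservation identity $\sum_k I(W^{(k)})=\ell\,I(W)$, which uses that $g$ is an $\fgq$-linear bijection so uniform $U_0^{\ell-1}$ gives uniform $X_0^{\ell-1}$, and then invokes Doob's martingale convergence theorem. No gaps; this is essentially the same route.
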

Using the lemma above, the channel polarization occurs if the probability that $I_\infty \in \{0,1 \}$ is one. The term "polarization" refers to the fact that the subchannels \emph{polarize} to noiseless channels or pure-noisy channels \cite{A}.

\begin{definition}\label{def4:polarcode}
A \emph{polar code} is a code with kernel $G$ such that in the construction above, $Pr(I_\infty \in \{0,1 \})=1$ and
\begin{equation}\label{eq4:*}
I_\infty=\begin{cases}
1, \quad & \text{ w.p. } I(W),\\
0, \quad & \text{ w.p. } 1-I(W).
\end{cases}
\end{equation}
\end{definition}
From the definition above, we see that a repeated application of the matrix $G$ polarizes the underlying channel, i.e., the resulting subchannels $W^{(i)}$ ($i \in \{1,\dots,\ell^n \}$) tend toward either noiseless channels or pure-noisy channels. Moreover, Condition \eqref{eq4:*} guarantees that the fraction of the noiseless channels to all channels approaches $I(W)$, i.e., polar codes are capacity achieving. That suggests using the noiseless channels for transmitting the information symbols while transmitting no information over the pure-noisy channels, which are called the frozen symbols \cite{A}.

The polar codes introduced by Arikan \cite{A} in 2008 use the $2 \times 2$ matrix
\[
G_2=\begin{pmatrix} 1 & 0\\ 1 & 1\\ \end{pmatrix}.
\]
Thereafter, Urbanke, Korada, and \sas generalized Arikan's construction for BSCs. They showed that any $\ell \times \ell$ matrix $G$, none of whose column permutations is an upper triangular matrix, polarizes the channel. Mori and Tanaka \cite{MT1'} generalized the idea further to $q$-ary input S-DMC.

Polar codes have been found to be useful for many applications. They can be used to construct a lossy and lossyless source channel that achieve the rate-distortion trade-off with low encoding and decoding complexity, i.e., they have an optimal performance in that setting \cite{UHK}. Polar codes can also be used for deterministic broadcast channels \cite{GAG}, to achieve the secrecy capacity of wiretap channels \cite{MahdV}, and in ubiquitous computing and sensor network applications \cite{Ma}.

We conclude this section by giving sufficient condition for an non-identity $\ell \times \ell$ matrix $G$ to polarize the $q$-ary input S-DMC $W$. 

\begin{theorem}\label{Thm4:1.4}\cite[Theorem 4]{USK},\cite[Theorem 13]{MT1'} ($q$ is a prime integer)
Given a $q$-ary input symmetric discrete memoryless channel $W$, any $\ell \times \ell$ matrix $G$ none of whose column permutations is an upper triangular polarizes the channel.
\end{theorem}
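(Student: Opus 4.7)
The plan is to establish polarization by first showing that $\{I_n\}$ converges almost surely, and then forcing the limit to live in $\{0,1\}$ by exploiting the non-upper-triangular structure of $G$. I would begin by verifying that $\{I_n\}$ is a bounded martingale with respect to the filtration $\mathcal{F}_n = \sigma(B_1,\dots,B_n)$. The key identity is the conservation of mutual information under an invertible linear transformation of a uniform input,
\[
\sum_{i=1}^{\ell} I(W^{(i)}) \;=\; \ell\cdot I(W),
\]
which, combined with the fact that $B_{n+1}$ is uniform on $\{1,\dots,\ell\}$, yields $\mathbb{E}[I_{n+1}\mid \mathcal{F}_n]=I_n$. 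Boundedness of $I_n$ in $[0,1]$ together with the lemma cited in the excerpt gives $I_n\to I_\infty$ almost surely, and hence $\mathbb{E}[(I_{n+1}-I_n)^2]\to 0$. This tells us that for almost every sample path, \emph{all} of the values $I(W_n^{(1)}),\dots,I(W_n^{(\ell)})$ asymptotically coincide with $I(W_n)$: the kernel stops producing any spread.

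The heart of the proof — and the main obstacle — is the following combinatorial information-theoretic lemma: there exist constants $\delta>0$ and $\epsilon(\delta)>0$, depending only on $G$, $\ell$, $q$, such that whenever $I(W)\in [\delta,1-\delta]$, some subchannel satisfies $|I(W^{(i)})-I(W)|\ge \epsilon(\delta)$. This is precisely where the hypothesis on $G$ enters. If no column permutation of $G$ is upper triangular, then for some row index $i$ one can find a column whose entry in row $i$ is nonzero while a later row has a zero in that column; over the prime field $\mathbb{F}_q$ this algebraic asymmetry forces the channel seen by $u_i$ to be genuinely better (or worse) than $W$ itself by an amount bounded below uniformly in $W$, provided $I(W)$ stays away from $0$ and $1$. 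The standard route is to reduce by convexity/continuity to a finite set of extremal channels and check the strict inequality there, using that an upper triangular $G$ (after permutations) would make the subchannels a tensor power of $W$ and thus preserve $I$ exactly. Combining this lemma with the vanishing of the one-step spread from the previous paragraph, $I(W_n)$ cannot remain in $[\delta,1-\delta]$ on a set of positive probability, and since $\delta$ is arbitrary we conclude $I_\infty\in\{0,1\}$ almost surely.

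Finally, the martingale identity $\mathbb{E}[I_\infty]=I_0=I(W)$, together with $I_\infty\in\{0,1\}$, forces $\Pr(I_\infty=1)=I(W)$ and $\Pr(I_\infty=0)=1-I(W)$, which is exactly condition \eqref{eq4:*} in Definition \ref{def4:polarcode}. Thus $G$ polarizes $W$. The martingale and convergence steps are routine; essentially all of the work, and the only place where the hypothesis on $G$ is actually used, lies in the strict-spread lemma of the middle paragraph.
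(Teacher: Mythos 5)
The paper does not prove this theorem; it is quoted from \cite{USK} and \cite{MT1'}. Your overall architecture is the same as in those references: $I_n$ is a bounded martingale by the conservation identity $\sum_{i=1}^{\ell} I(W^{(i)})=\ell\, I(W)$, hence converges a.s.\ with vanishing increments; a uniform ``strict spread'' lemma for channels with $I(W)\in[\delta,1-\delta]$ then forces $I_\infty\in\{0,1\}$; and $\mathbb{E}[I_\infty]=I(W)$ gives the fractions in \eqref{eq4:*}. Those outer steps are correct and routine, as you say.

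The problem is that the middle lemma is the entire mathematical content of the theorem, and your justification of it has a genuine gap, in two places. First, the only argument you actually give --- that a kernel which \emph{is} upper triangular up to column permutation leaves every subchannel equivalent to $W$ --- is the easy converse direction; it does not imply that a non-upper-triangular $G$ produces a spread bounded below by $\epsilon(\delta)>0$ \emph{uniformly over all channels} $W$ with $I(W)\in[\delta,1-\delta]$. For that you must analyze which \emph{channels} (for the fixed kernel $G$) have all subchannel mutual informations equal to $I(W)$, and this is precisely where primality of $q$ is indispensable: for composite $q$ there are channels (e.g.\ ones revealing only the coset of a proper additive subgroup of $\mathbb{Z}_q$) whose subchannels all have the same mutual information strictly between $0$ and $1$, which is why the non-prime case needs the extra hypothesis of Theorem~\ref{Thm4:1.5}. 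Your sketch never uses the absence of proper subgroups, so as written it would ``prove'' a false statement for composite $q$. Second, the proposed reduction ``by convexity/continuity to a finite set of extremal channels'' is not available: the output alphabet $\mathcal{Y}$ is arbitrary, so the set of $q$-ary input channels is infinite-dimensional and its extreme points do not form a finite family. The known proofs either establish a quantitative two-user entropy inequality directly (the \sas--Telatar--Arikan route, where the no-proper-subgroup property of $\mathbb{Z}_q$ enters the equality/near-equality analysis), or compactify by identifying a channel with a probability measure on the simplex of posterior distributions and combine weak-$*$ compactness with an equality-case characterization; in addition, for a general $\ell\times\ell$ kernel one must first make the combinatorial reduction (via column permutations and the fact that conditioning on $u_0^{i-1}$ and row operations only relabel the subchannels) to an index $i$ whose partial distance is at least $2$, to which the two-user inequality is then applied. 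None of these steps appears in your sketch, so the heart of the theorem remains unproved.
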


\begin{remark}
Let $G$ be an $\ell \times \ell$ matrix and $U$ be an $\ell \times \ell$ upper triangular matrix. Then, the channels $W^{(i)}$ have the same statistical properties under $G$ and $GU$, i.e., $G$ and $GU$ are equivalent in the sense of Definition 4 in \cite{MT5}. Moreover, column permutations also do not change the statistical properties of $W^{(i)}$. Therefore, using an $LUP$ decomposition we may assume that $G$ itself is a lower triangular matrix.
\end{remark}

\begin{theorem}\label{Thm4:1.5} \cite[Theorem 11]{MT5} ($q$ is a prime power)
Let $\fgq$ be a non-prime finite field of characteristic $p$. Given a $q$-ary input symmetric discrete memoryless channel $W$, then an $\ell \times \ell$ lower triangular matrix $G$ polarizes the channel $W$ if and only if $\fgq=\mathbb{F}_{p}(G)$, where $\mathbb{F}_{p}(G)$ is the field extension of $\mathbb{F}_{p}$ generated by the entries of $G$.
\end{theorem}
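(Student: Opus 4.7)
The plan is to prove each implication separately: the ``only if'' direction by producing a subfield obstruction, and the ``if'' direction by identifying the fixed points of the channel-polarization process.

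For the ``only if'' direction, I would argue contrapositively. Suppose $K := \mathbb{F}_p(G)$ is a proper subfield of $\mathbb{F}_q$, and set $m = [\mathbb{F}_q:K] \geq 2$. Since every entry of $G$ lies in $K$, the matrix $\kgg$ also has entries in $K$, so the encoding map $u \mapsto u\kgg$ sends $K^{\ell^n}$ into $K^{\ell^n}$. Viewing $\mathbb{F}_q$ as an $m$-dimensional $K$-vector space, the channel $W$ admits a multilevel decomposition into $m$ stacked sub-channels indexed by cosets of $K$, and the polar transform under $G$ respects this decomposition. The limiting capacity $I_\infty$ is therefore confined to the discrete set $\{j/m : 0 \le j \le m\}$. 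For a suitably chosen channel $W$ (for instance, one whose induced level channels retain intermediate capacity in the limit), the martingale identity $\mathbb{E}[I_\infty] = I(W)$ places positive mass of $I_\infty$ on some value $j/m$ with $0 < j < m$, so $G$ fails to polarize $W$.

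For the ``if'' direction, assume $\mathbb{F}_p(G) = \mathbb{F}_q$. As in the proof of Theorem \ref{Thm4:1.4}, $\{I_n\}$ is a bounded martingale, so $I_n \to I_\infty$ almost surely. The task is to show $I_\infty \in \{0,1\}$ a.s.\ by characterizing the fixed points of the subchannel transformation $W \mapsto W^{(i)}$ induced by $G$. Following the Mori--Tanaka framework, any such fixed point corresponds to a channel whose output distributions are constant along cosets of some $\mathbb{F}_p$-subspace $H \subseteq \mathbb{F}_q$ that is stable under multiplication by every entry of $G$. Therefore $H$ is invariant under the $\mathbb{F}_p$-algebra generated by those entries, which by hypothesis equals $\mathbb{F}_q$. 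Hence $H \in \{0, \mathbb{F}_q\}$, corresponding respectively to the perfectly noisy channel ($I=0$) and the perfectly noiseless channel ($I=1$). The distribution of $I_\infty$ in Definition \ref{def4:polarcode} then follows from $\mathbb{E}[I_\infty] = I(W)$.

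The hard part, in my view, is the fixed-point analysis in the ``if'' direction: extracting from stationarity of the subchannel transformation a genuine additive invariance on the output alphabet $\mathbb{F}_q$, so that an a priori $\mathbb{F}_p$-linear stability condition can be upgraded into invariance under the full algebra $\mathbb{F}_p(G)$. Once this representation-theoretic step is in place, the remainder of the argument parallels the prime-field case of Theorem \ref{Thm4:1.4}, and the ``only if'' direction reduces to bookkeeping with the multilevel decomposition.
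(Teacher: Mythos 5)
A preliminary remark on the comparison itself: the paper contains no proof of Theorem~\ref{Thm4:1.5} --- it is quoted from \cite{MT5} and used as a black box --- so your proposal can only be measured against the argument in that reference, whose broad strategy (martingale convergence of $I_n$ plus an analysis of channels stable under the transform, with the algebraic core that an additive subgroup of $\mathbb{F}_q$ closed under multiplication by the entries of $G$ is closed under the ring $\mathbb{F}_p[G]=\mathbb{F}_p(G)$ and hence trivial) your outline does follow in spirit. However, there are two genuine gaps. In the ``only if'' direction the argument is circular at the decisive point: confinement of $I_\infty$ to the set $\{j/m: 0\le j\le m\}$ together with $\mathbb{E}[I_\infty]=I(W)$ does not exclude that all the mass sits on $\{0,1\}$ (mass $I(W)$ at $1$ and $1-I(W)$ at $0$ has the correct mean), and your parenthetical choice of ``a channel whose induced level channels retain intermediate capacity in the limit'' presupposes exactly what has to be proved. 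The standard repair is explicit and short: with $K=\mathbb{F}_p(G)\subsetneq\mathbb{F}_q$, take $H=K$ and let $W$ output the coset $x+H$. Every entry of $G^{\otimes n}$ lies in $K$ and $KH\subseteq H$, so the outputs depend only on the cosets $u_i+H$; since $G^{\otimes n}$ is invertible they determine all of these cosets and nothing more, hence every subchannel again reveals exactly the coset of its input, $I_n\equiv I(W)=1-[\mathbb{F}_q:K]^{-1}\in(0,1)$ for all $n$, and $I_\infty$ is deterministic and intermediate.

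In the ``if'' direction, the fixed-point characterization you assert is precisely the hard analytic content of \cite{MT5}: one must show, via Bhattacharyya-type parameters attached to the additive subgroups, that almost-sure convergence of the martingale forces the limiting channels to be asymptotically of coset-revealing type, and nothing in your sketch supplies this. Worse, in the stated generality the characterization is false: the diagonal matrix $G=\mathrm{diag}(\alpha,1,\dots,1)$ with $\alpha$ a primitive element of $\mathbb{F}_q$ is lower triangular and satisfies $\mathbb{F}_p(G)=\mathbb{F}_q$, yet its coordinates never interact, every subchannel is statistically equivalent to $W$ itself, and no channel with $0<I(W)<1$ polarizes; in particular not every stationary channel is constant on cosets of a stable subspace. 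Any correct proof must therefore use, at the point where near-stationarity is upgraded to extremality, the hypothesis (present in the setting of \cite{MT5} and in Theorem~\ref{Thm4:1.4}, but dropped in the transcription here) that no column permutation of $G$ is upper triangular --- for an invertible lower-triangular $G$ this says exactly that $G$ is not diagonal --- and your appeal to ``as in Theorem~\ref{Thm4:1.4}'' hides this step rather than carrying it out.
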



\section{The Performance of Polar Codes and the Rate of Polarization}\label{sec4:2}

In this section we explain the asymptotic error probability and how it depends on the exponent of the kernel, a quantity that plays an important role in determining the performance of polar codes. 

\subsection{The Error Probability}\label{sec4:2.4}
The performance of the polar code is determined based on the asymptotic error probability. In Arikan's construction of a binary polar code \cite{AT} using the matrix $G_2$, the asymptotic error probability of the polar code using the successive cancellation decoding is
\begin{equation}\label{sec4:eqG2}
\perr= \begin{cases}
o\left (2^{ -N^\beta} \right ), &\quad \beta <\frac{1}{2},\\
\omega \left ( 2^{ -N^\beta}\right ), &\quad \beta \geq \frac{1}{2}.
\end{cases}
\end{equation}
The error probability $P_{\erri}$ above\footnote{The parameter $\beta$ is an arbitrary real number, if $\beta<\sfrac{1}{2}$, then there exists a polar code that satisfies $\perr=o\left (2^{ -N^\beta}\right )$} is independent of the rate $R$ of the polar code. Mori and Tanaka \cite{MT3} extended the formula above to one that is rate dependent. Moreover, the threshold $\sfrac{1}{2}$ depends only on the matrix $G_2$ and not on the underlying channel $W$.

Now for any $\ell \times \ell$ matrix $G$ that polarizes the $q$-ary input S-DMC $W$, the asymptotic error probability of the polar code is given by
\begin{equation}\label{sec4:eqEG}
\perr= \begin{cases}
o\left (2^{ -N^\beta}\right ), &\quad \beta <E(G),\\
\omega \left (2^{ -N^\beta}\right ), &\quad \beta \geq E(G)
\end{cases}
\end{equation}
for some well-defined constant $E(G)\in [0,1)$ depending only on the matrix $G$ and not on the underlying channel $W$. The quantity $E(G)$ is called the \emph{exponent} of the matrix $G$. It measures the performance of the polar code under successive cancellation decoding (\cite[Section IV]{USK} for $q=2$ and \cite[Theorem 9]{MT1'} for any $q>2$).

\begin{remark}
As in \cite[Theorem 19]{MT1'}, Equation~\eqref{sec4:eqEG} can be read as follows. For any kernel which polarizes the undelying channel it holds
\[
\lim_{n \to \infty} Pr(Z_n < 2^{\ell^{-n\beta}}) = I(W)
\]
for $\beta < E(G)$ and 
\[
\lim_{n \to \infty} Pr(Z_n < 2^{\ell^{-n\beta}}) = 0
\]
for $\beta > E(G)$.

\end{remark}

\subsection{The Formula of the Exponent}\label{sec4:2.5}
As mentioned in the previous subsection, the exponent of the matrix $G$ plays an important role in determining the performance of the polar codes. In \cite{USK},\cite{MT1}, the authors gave an algebraic description of the exponent in terms of the partial distances of the matrix $G$. 

\begin{definition}
 Given an $\ell \times \ell$ matrix $G=(g_1,\dots,g_\ell)^{T}\in \mat{\ell}{\ell}{\fgq}$, the \emph{partial distances} $D_i$ $(i=1,2,\dots,\ell)$ are defined by
\begin{align*}
D_\ell&:=\wtt(g_\ell)=d(g_\ell,0),\\
D_i&:=d(g_i,\langle g_{i+1},\dots,g_{\ell} \rangle ),
\end{align*}
where $d(g_i,\langle g_{i+1},\dots,g_{\ell} \rangle )$ is the distance from the codeword $g_i$ to the code generated by the codewords $g_{i+1},\dots,g_{\ell}$. The partial distances $\{D_i \}_{i=1}^{\ell}$ of a matrix $G$ will be called the \emph{profile} of $G$.
\end{definition}

\begin{theorem}\label{sec4:Thm2.5}\cite[Theorem 14]{USK},\cite[Theorem 1]{MT1}
 Let $W$ be a $q$-ary input symmetric discrete memoryless channel and $G$ be an $\ell \times \ell$ matrix that polarizes the channel $W$ with profile $\{D_i \}_{i=1}^{\ell}$. Then, the exponent of $G$ is given by
\begin{equation}\label{sec4:2.*}
 E(G)=\frac{1}{\ell}\sum_{i=1}^{\ell}\log_\ell D_i. 
\end{equation}
\end{theorem}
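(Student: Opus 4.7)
The plan is to analyze the Bhattacharyya process $Z_n = Z(W_n)$ and read off its exponential decay rate on the doubly-logarithmic scale. First I would prove a two-sided single-step Bhattacharyya recursion: for any input channel $W$ and any $i \in \{1,\dots,\ell\}$, there exist constants $c_i, C_i$ depending only on $G$ and $\ell$ with
\[
c_i \cdot Z(W)^{D_i} \; \leq \; Z(W^{(i)}) \; \leq \; C_i \cdot Z(W)^{D_i}.
\]
The upper bound comes from expanding the defining sum of $Z(W^{(i)})$ over pairs of input vectors $(u, u')$ that agree on the coordinates $< i$, grouping terms according to the Hamming weight of $(u-u')\kgg$, and applying the Cauchy--Schwarz inequality together with the fact that any such pair with $u_i \neq u'_i$ gives a codeword of weight at least $D_i$. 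The lower bound comes from restricting the same sum to a single pair of inputs that realizes $D_i$ and using the symmetry of $W$ to prevent the contribution from cancelling.

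Second, on the event $\{Z_n \to 0\}$, which has probability $I(W)$ by the limit lemma together with the polarization hypothesis of Theorem~\ref{Thm4:1.4} or~\ref{Thm4:1.5}, I would set $L_n := -\log_2 Z_n$ and $V_n := \log_\ell L_n$. Iterating the single-step recursion along $W_{n+1} = W_n^{(B_{n+1})}$ then gives
\[
V_{n+1} \;=\; V_n + \log_\ell D_{B_{n+1}} + o(1),
\]
so $V_n$ is essentially a random walk whose i.i.d.\ increments have mean $E^\ast := \frac{1}{\ell}\sum_{i=1}^{\ell} \log_\ell D_i$. The strong law of large numbers yields $V_n/n \to E^\ast$ almost surely, whence $Z_n \leq 2^{-N^{E^\ast - o(1)}}$ with $N = \ell^n$. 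A union bound over the information set converts this into the $P_{\text{err}}$ upper bound of \eqref{sec4:eqEG} for every $\beta < E^\ast$, establishing $E(G) \geq E^\ast$. The matching inequality $E(G) \leq E^\ast$ follows from the lower bound in the recursion combined with a CLT-type tail estimate that rules out anomalously fast Bhattacharyya decay along too large a fraction of the sample paths.

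The main obstacle is the lower bound in the single-step recursion: identifying a pair $(u, u')$ attaining $D_i$ is immediate, but showing that the corresponding contribution survives averaging over the coordinates $> i$ requires the full input-symmetry hypothesis on $W$, and in the non-prime setting of Theorem~\ref{Thm4:1.5} one must additionally invoke the hypothesis $\fgq = \mathbb{F}_p(G)$ to exclude destructive cancellation along an $\mathbb{F}_p$-subchannel decomposition. A secondary technical point is the passage from the almost-sure decay rate of $Z_n$ to the claimed decay of $P_{\text{err}}$ uniformly in the rate $R < I(W)$, which is handled by the concentration of the fraction of well-polarized subchannels around $I(W)$.
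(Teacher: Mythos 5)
First, note that the paper does not prove this theorem at all: it is imported verbatim from \cite[Theorem 14]{USK} and \cite[Theorem 1]{MT1}, so your proposal can only be measured against those proofs. Your overall architecture is indeed the standard one (single-step Bhattacharyya bounds of the form $Z(W^{(i)})\approx Z(W)^{D_i}$, then a law-of-large-numbers analysis of $\log_\ell(-\log Z_n)$ along the random process $W_{n+1}=W_n^{(B_{n+1})}$, plus the passage from $Z_n$ to $P_{\mathrm{err}}$), and for $q=2$ it is essentially the Korada--\c{S}a\c{s}o\u{g}lu--Urbanke / Arikan--Telatar argument.

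However, there is a genuine gap in the step you yourself flag as the main one: the lower bound $Z(W^{(i)})\ge c_i\,Z(W)^{D_i}$ for general $q$-ary channels. Here $Z(W)$ is the \emph{average} of the pairwise Bhattacharyya parameters $Z_{x,x'}(W)=\sum_y\sqrt{W(y|x)W(y|x')}$, and restricting the defining sum of $Z(W^{(i)})$ to a single pair $(u,u')$ realizing $D_i$ only produces a product of $D_i$ \emph{specific} pairwise parameters. For $q>2$ these can be arbitrarily smaller than $Z(W)$ (indeed zero: take $\mathcal{X}=\mathbb{F}_4$ with $W$ a product of a noiseless bit and a useless bit, so some $Z_{x,x'}=0$ while $Z(W)>0$), so the single-pair argument does not yield $c_i Z(W)^{D_i}$, and this is precisely where Mori--Tanaka must track the whole family of pairwise parameters (or their maximum) rather than the averaged $Z$. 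Your appeal to input symmetry ``to prevent cancellation'' does not address this --- all terms in $Z$ are nonnegative, so there is no cancellation to prevent --- and the hypothesis $\mathbb{F}_q=\mathbb{F}_p(G)$ of Theorem~\ref{Thm4:1.5} is a polarization criterion, not an ingredient of the exponent formula. Two further (more standard, but still unaddressed) points: the increments of $V_n$ are not i.i.d., since the $o(1)$ correction $\log_\ell\bigl(1+\log_2 C_i/L_n\bigr)$ is only small once $Z_n$ is already small, so the SLLN step needs the Arikan--Telatar bootstrapping/stochastic-domination argument rather than a direct appeal to i.i.d.\ sums; and the converse is obtained from the same random-walk upper bound on $V_n$ (giving $\Pr[Z_n\le 2^{-\ell^{n\beta}}]\to 0$ for $\beta>E(G)$) together with a lower bound on $P_{\mathrm{err}}$ by a single information-set subchannel --- no CLT-type estimate is involved, and invoking one does not repair the missing $q$-ary lower bound on which that direction rests.
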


For $q$-ary codes, Mori and Tanaka \cite{MT1} suggested using algebraic geometry codes in order to get larger exponents. The motivation behind this idea is the Reed-Solomon code which gives large exponents. In particular, algebraic geometry codes have in general large minimum distance and often they have a nested structure similar to the Reed-Solomon code which makes them suitable for channel polarization.

\subsection{Concatenation}\label{sec4:4.1}

In this section we introduce the \emph{concatenation} of codes which is illustrated in the following theorem.

\begin{theorem}\label{Thm4:4.1}\cite[Theorem 6.3.1]{XL}
Let $C_1$ be a $(N,K,D)$-linear code over $\fg{q^m}$ and $C_2$ be a $(n,m,d)$-linear code over $\fg{q}$. Then, there exists a\\ $(nN,mK,dD)$-linear code $C$ over $\fg{q}$.
\end{theorem}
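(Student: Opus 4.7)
The plan is to construct the concatenated code $C$ explicitly by composing the outer code $C_1$ with an $\mathbb{F}_q$-linear encoding of the inner code $C_2$, and then verify the three parameters $(nN, mK, dD)$ one at a time.

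First, since $C_2$ has dimension $m$ over $\mathbb{F}_q$, I fix an $\mathbb{F}_q$-linear isomorphism $\varphi: \mathbb{F}_{q^m} \to C_2 \subseteq \mathbb{F}_q^n$; concretely, pick any $\mathbb{F}_q$-basis $\{\alpha_1,\dots,\alpha_m\}$ of $\mathbb{F}_{q^m}$ and an $\mathbb{F}_q$-basis $\{v_1,\dots,v_m\}$ of $C_2$, and set $\varphi(\sum a_i\alpha_i) := \sum a_i v_i$. Extend $\varphi$ coordinatewise to a map $\Phi: \mathbb{F}_{q^m}^N \to \mathbb{F}_q^{nN}$ by $\Phi(c_1,\dots,c_N) := (\varphi(c_1),\dots,\varphi(c_N))$. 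Define $C := \Phi(C_1)$. Since $C_1$ is $\mathbb{F}_{q^m}$-linear it is in particular $\mathbb{F}_q$-linear, and $\Phi$ is $\mathbb{F}_q$-linear, so $C$ is an $\mathbb{F}_q$-linear subspace of $\mathbb{F}_q^{nN}$.

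The length is immediate: each of the $N$ outer coordinates in $\mathbb{F}_{q^m}$ is replaced by $n$ coordinates in $\mathbb{F}_q$, giving length $nN$. For the dimension, note that $\varphi$ is injective (it is an isomorphism onto $C_2$), hence $\Phi$ is injective, hence $\Phi|_{C_1}$ is injective. Therefore $|C| = |C_1| = (q^m)^K = q^{mK}$, so $C$ has dimension $mK$ over $\mathbb{F}_q$.

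The minimum distance is the only step with any content, and it is the place where I expect the reader to want to see the argument spelled out. Let $c = (c_1,\dots,c_N) \in C_1$ be nonzero, and consider $\Phi(c) = (\varphi(c_1),\dots,\varphi(c_N)) \in C$. Let $S = \{i : c_i \neq 0\}$. Since $c \neq 0$ and $C_1$ has minimum distance $D$, we have $|S| \geq D$. For each $i \in S$, the element $\varphi(c_i)$ is a nonzero codeword of $C_2$ (because $\varphi$ is injective and $c_i \neq 0$), hence has Hamming weight at least $d$ as a vector in $\mathbb{F}_q^n$. The supports of the blocks $\varphi(c_i)$ within $\Phi(c)$ are disjoint by construction, so
\[
\wtt(\Phi(c)) = \sum_{i \in S} \wtt(\varphi(c_i)) \geq |S| \cdot d \geq Dd.
\]
Taking the minimum over nonzero $c \in C_1$ gives minimum distance at least $dD$ for $C$, completing the proof.

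The only potential pitfall is making sure $\varphi$ is chosen to be $\mathbb{F}_q$-linear rather than merely a bijection between sets; this is what guarantees $\mathbb{F}_q$-linearity of $C$ and the additivity of supports across blocks. Since the choice of basis makes this automatic, there is no real obstacle, and the construction is entirely explicit.
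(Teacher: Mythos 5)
Your construction is correct and is exactly the standard concatenation argument that the paper relies on by citing \cite[Theorem 6.3.1]{XL}: replace each $\fg{q^m}$-symbol of an outer codeword by its image under an $\fg{q}$-linear isomorphism onto the inner code, then check length, dimension via injectivity, and distance via disjoint block supports. The only cosmetic remark is that you prove minimum distance at least $dD$, which is precisely how the parameter $dD$ in the statement is to be read (the concatenated code may have strictly larger distance), so nothing is missing.
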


The code $C_1$ in Theorem \ref{Thm4:4.1} is called the \emph{outer code}, the code $C_2$ is called the \emph{inner code}, and the code $C$ is called the \emph{concatenated code}. In the following, we will use the descent code $(\fg{q})^m$ which is a $(m,m,1)$-linear code as an inner code and all field extensions $\fg{q}$ are of characteristic 2. Therefore, given a $(N,K,D)$-linear code over $\fg{2^m}$, Theorem \ref{Thm4:4.1} yields a binary $(mN,mK,D)$-linear code.

Let $G \in \mat{L}{L}{\fg{q}}$ be a generating matrix for a code over $\fg{q}$ and let $(D_1,\dots,D_L)$ be its profile. Recall that the exponent of $G$ is given by
\[
 E(G):=\frac{1}{L} \sum_{i=1}^{L} \log_L D_i = \frac{1}{L \log_2 L} \sum_{i=1}^{L} \log_2 D_i.
\]
Applying the inner code $(m,m,1)$ to the matrix $G$ will replace each symbol in $G$ with $m$ binary symbols and each row will be redundant $m$ times. Therefore, the new concatenated matrix, denoted by $G_2$, will be of size $mL \times mL$, and will have at least the following profile
\[
 \underbrace{D_1,\dots, D_1}_{m\text{-times}},\underbrace{D_2,\dots, D_2}_{m\text{-times}},\dots,\underbrace{D_L,\dots, D_L}_{m\text{-times}}.
\]
Then, the exponent of the binary matrix $G_2$ satisfies the inequality
\begin{align*}
E(G_2):&=\frac{1}{mL\log_2(mL)}\cdot m\cdot \sum_{i=1}^{L}\log_2 D_i\\
       &\geq \frac{1}{L\log_2(mL)}\cdot L \cdot \log_2 L\cdot  E(G)\\
       &\geq \frac{\log_2L}{\log_2(mL)}\cdot E(G).
\end{align*}

\subsection{Our Results}\label{sec4:2.6}

Mori and Tanaka \cite{MT1} evaluated the performance of polar codes using kernels constructed from the generating matrices of the Reed-Solomon and Hermitian codes over a $q$-ary field. They have found that numerically Hermitian codes give larger exponents than Reed-Solomon codes. That suggests using different algebraic geometry codes as they have large minimum distance and often have the same nested structure as Reed-Solomon codes. We continue in this direction by using algebraic geometry codes to study the behavior of the exponent. In Section \ref{sec4:3} we show for a subclass of algebraic geometry codes, that $E(G_L)\to 1$ as the number of affine rational points $L \to \infty$. 

In Section \ref{sec4:4} we will apply the concatenation of codes to construct binary codes from $q$-ary codes. As the algebraic geometry codes are defined over different field extensions of $\fg{2}$. This helps us to study the performance of different algebraic geometry codes defined over a common field which is the binary field $\fg{2}$. We will study numerically whether to use larger field extensions or curves with many rational points to get a larger exponent. In other words, it is the study of how to approach $\infty$ in the most efficient way using either concatenation or geometry. In the first case (concatenation) we would need codes defined over large fields and in the second case (geometry) we would need curves with many rational points defined over small fields. 

In Section \ref{sec4:5} we compare numerically how the Reed-Solomon, Hermitian, and Suzuki codes behave in the settings above for a given binary block size. It turns out that each code will give the maximum exponent for some range of the binary block size, and that more geometry is preferable as the block size increases. We also compare their performance as error-correcting codes.


\section{The Algebraic Geometry Codes in Channel Polarization}\label{sec4:3}
In this section we recall first the construction of the algebraic geometry codes. We give three examples of algebraic geometry codes, the Reed-Solomon code, the Hermitian code, and the Suzuki code. Moreover, we will show in this section that for a subclass of algebraic geometry codes with block length $L$ and generating matrix $G_L$, we have $E(G_L) \to 1$ as $L \to \infty$.

\subsection{Algebraic Geometry Codes as Kernels for Channel Polarization}\label{sec4:3.1}

Let $X \fieldextension \fgq$ be a curve (smooth, irreducible, and projective) of genus $g$ over $\fgq$ with global function field $F:=\fgq(X)$. Let $X(\fgq) $ be the set of all $\fgq$-rational points on $X$ with cardinality $N(F)$. Let $1 \leq n < N(F)$ and choose $n$ distinct rational places $P_1,\dots,P_n \in X(\fgq)$. Set $D:=P_1+ \dots +P_n$ and let $G':=\sum_{i=1}^{s}n_iQ_i- \sum_{j=1}^{t}m_jQ'_j \in \divi(X)$ ($n_i,m_j \in \NN$) be a divisor of $X$ such that $\supi(D) \cap \supi(G')=\phi$. Define the Riemann-Roch space $\mathcal{L}(G')$ to be the $\fgq$-vector space of all rational functions $f\in F$ which are having only poles at $Q_i$ of order less than or equal to $n_i$ and zeros at $Q'_i$ of order greater than or equal to $m_j$, i.e.,
\[
\mathcal{L}(G'):=\{ f \in F \,|\, (f)+G' \geq 0 \} \cup \{0\}
\]

 Then, the algebraic geometry code is defined as
\[
 C_\mathcal{L}(D,G'):=\{ (f(P_1),\dots, f(P_n))\in \ffq^n \,|\, f \in \mathcal{L}(G') \}.
\]
The code $ C_\mathcal{L}(D,G')$ is a $(n,k,d)$-linear code over $\fgq$, where $k:=\ell(G')-\ell(G'-D)$ and $d \geq d^*:=n-\deg(G')$. Moreover, assume $n > \deg(G')$ and let $\{f_1, \dots, f_{k} \}$ be a basis for $\mathcal{L}(G')$ over $\fgq$. Then, $C_\mathcal{L}(D,G')$ has the following generator matrix
\[
 G:=\left( f_{i}(P_j) \right )_{\substack{i=1,\dots,k\\ j=1,\dots,n}}. 
\]
\begin{example} (Reed-Solomon Code)\label{sec4:ex3.1}
 Let $F:=\fgq(x)$ be the rational function field that corresponds to the projective line $\mathbb{P}^{1}(\fgq)$. Let $P_1,\dots,P_q\in \mathbb{P}_F$ be the affine rational places of $F$. Set $G':=dP_\infty$ ($d < q$). The set $\{1,x,x^2,\dots,x^{d} \}$ is a basis for $\LL(dP_\infty)$ over $\fgq$. Then, the Reed-Solomon code of length $q$ is the algebraic geometry code $C_\mathcal{L}(P_1 + \dots + P_q,dP_\infty)$.
\end{example}

\begin{example} (Hermitian Code)\label{sec4:ex3.2}
Let $\fh:=\fgq(x,y)$ be the Hermitian function field of genus $\gh=\sfrac{q_0(q_0-1)}{2}$ over $\fgq$ $(q:=q_0^2$ and $q_0$ is a prime power) defined by the equation $y^{q_0}+y=x^{q_0+1}$. Let $P_1,\dots,P_{q_0^3} \in \mathbb{P}_{\fh}$ be the affine $\fgq$-rational places of $\fh$. Set $G':=dP_\infty$ ($d<q_0^3$). The set $\{x^iy^j \mid i \geq 0, iq_0+j(q_0+1)\leq d \}$ is a generating set for $\LL(dP_\infty)$ over $\fgq$ \cite[Lemma 6.4.4]{Sti}. Then, the Hermitian code of length $q_0^3$ is the algebraic geometry code $C_\mathcal{L}(P_1 + \dots + P_{q_0^3},dP_\infty)$. 
\end{example}

\begin{example} (Suzuki Code)\label{sec4:ex3.3}
Let $\fs:=\fgq(x,y)$ be the Suzuki function field of genus $\gs=q_0(q-1)$ over $\fgq$ $(q:=2q_0^2$, $q_0=2^m$, and $m \in \NN$) defined by the equation $y^{q}-y=x^{q_0}(x^q-x)$. Let $P_1,\dots,P_{q^2} \in \mathbb{P}_{\fs}$ be the affine $\fgq$-rational places of $\fs$. Set $G':=dP_\infty$ ($d<q^2$). The set $\{x^ay^bz^{c}w^{d'} \mid a,b,c,d'\geq 0, aq+b(q+q_0)+c(q+2q_0)+d'(q+2q_0+1)\leq d \}$ is a generating set for $\LL(dP_\infty)$ over $\fgq$ \cite{HS}, where $z:=x^{2q_0+1}-y^{2q_0}$ and $w:=xy^{2q_0}-z^{2q_0}$. Then, the Suzuki code of length $q^2$ is the algebraic geometry code $C_\mathcal{L}(P_1 + \dots + P_{q^2},dP_\infty)$.
\end{example}

\subsection{The kernel $E(G)$ for Algebraic Geometry Codes}\label{sec4:3.2}
In this section we use Stirling's formula to show that $E(G_L) \to 1$ as $L \to \infty$, where $G_L$ is the generating matrix of an algebraic geometry code of block length $L$. First we state the Oesterl\'{e} bound (see \cite[Theorem 8, page 130]{Hurt}) which gives a lower bound to the genus $g$ of a curve $X \fieldextension \fgq$ with $L$ affine rational points. Let $\ell$ be the unique integer such that $\sqrt{q}^\ell < L \leq \sqrt{q}^{\ell +1}$, i.e.,
\begin{equation}\label{eq:oell}
\ell=\left \lceil \frac{\log_2 L}{\log_2 \sqrt{q}}-1 \right \rceil.
\end{equation}
We find 
\begin{equation}\label{eq:ou}
u:=\frac{\sqrt{q}^{\ell+1}-L}{L\sqrt{q}-\sqrt{q}^\ell}\in [0,1). 
\end{equation}
Next, we find $\theta \in \left [\sfrac{\pi}{(\ell+1)},\sfrac{\pi}{\ell} \right )$ such that 
\begin{equation}\label{eq:otheta}
 \cos \left (\frac{\ell+1}{2}\theta \right)+u \cos \left(\frac{\ell-1}{2}\theta\right)=0.
\end{equation}
Then, the Oesterl\'e bound is the lower bound
\begin{equation}\label{eq:og}
 g \geq \frac{(L-1)\sqrt{q} \cos \theta +q -L}{q+1-2\sqrt{q} \cos \theta}.
\end{equation}
Note that for some small $L$, this bound can be achieved by maximal curves over $\fg{q^2}$ or for large $L$ by towers of function fields. We will study the case where $g := \sfrac{\left( (L-1)\sqrt{q} \cos \theta +q -L \right )}{\left ( q+1-2\sqrt{q} \cos \theta\right )}$.

\begin{remark}
 If $L$ is small relative to the field size $q$ (e.g., $L < \sqrt{q})$, then the right hand side of the Inequality \eqref{eq:og} is negative. In that case we take $g=0$ as the code can be achieved by Reed-Solomon codes as in Example \ref{sec4:ex3.1}.

\end{remark}

Now we prove the result of this section. Here we will study only the subclass of algebraic geometry codes over $\fgq$ with a matrix $G_L \in \mat{L}{L}{\fgq}$ that has profile\footnote{The least possible profile for a nested structure AG code using the point at infinity}
\begin{equation}\label{eq:AGprofile}
 D_i := \begin{cases}
          L-g+1-i, \qquad &i=1,2,\dots,L-g,\\
          1, \qquad & i=L-g+1,\dots,L.
        \end{cases}
\end{equation}
Therefore,
\begin{equation}\label{eq:AGexponent}
 E(G_L)=\frac{\log_q ((L-g)!)}{L\log_q L}.
\end{equation}

\begin{prop}\label{prop4:3.1}
 For the class of algebraic geometry codes over a fixed field $\fgq$ of length $L$ and matrix $G_L$ with profile $D_i=\max(L-g+1-i,1)$ $(i=1,\dots,L))$, we have
\[
E(G_L) \to 1 \qquad \text{ as }\qquad L \to \infty. 
\]
\end{prop}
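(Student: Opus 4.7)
The plan is to apply Stirling's formula to $(L-g)!$ and then invoke the Oesterl\'e bound to control the growth of $g$ relative to $L$.

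First, Stirling gives $\log((L-g)!) = (L-g)\log(L-g) - (L-g) + O(\log L)$. Since the logarithm base cancels in $E(G_L) = \log_q((L-g)!)/(L\log_q L)$, substitution yields
\[
E(G_L) = \frac{L-g}{L}\cdot\frac{\log(L-g)}{\log L} - \frac{(L-g)\log_q e}{L\log_q L} + O\!\left(\frac{1}{\log L}\right).
\]
Because $L-g \leq L$, the last two terms are $o(1)$ as $L\to\infty$. Hence the problem reduces to showing that the product $((L-g)/L)\cdot(\log(L-g)/\log L)$ tends to $1$.

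To control the genus, I would insert the Oesterl\'e expression
\[
g = \frac{(L-1)\sqrt{q}\cos\theta + q - L}{q+1-2\sqrt{q}\cos\theta}
\]
and track its asymptotics. As $L\to\infty$ the index $\ell$ defined in \eqref{eq:oell} tends to infinity, forcing $\theta \in [\pi/(\ell+1),\pi/\ell)$ to $0$ and $\cos\theta \to 1$. From this I would extract the limiting behavior of $g/L$ and hence of $(L-g)/L$. Once $(L-g)/L \to 1$ is established, the ratio $\log(L-g)/\log L$ also tends to $1$ via $\log(L-g) = \log L + \log((L-g)/L)$, since the correction $\log((L-g)/L)$ is negligible compared with $\log L$.

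The main obstacle is precisely verifying $(L-g)/L \to 1$: naively substituting $\cos\theta = 1$ into the Oesterl\'e formula gives $g/L \to 1/(\sqrt{q}-1)$, which is bounded away from $0$. One must therefore retain the higher-order $\theta$-dependence (with $1-\cos\theta = O(1/\ell^2)$ and $\ell \sim \log L/\log\sqrt{q}$), or work in the asymptotic regime in which $q$ grows with $L$ (as it does for the Hermitian and Suzuki families of Examples \ref{sec4:ex3.2} and \ref{sec4:ex3.3}, where $g/L \to 0$). Carrying out this asymptotic expansion with enough precision to drive $g/L$ to $0$ is the technical crux; once it is in place, the Stirling estimate immediately delivers $E(G_L)\to 1$.
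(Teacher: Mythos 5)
Your route is the same as the paper's (Stirling's formula for $\log((L-g)!)$ combined with the Oesterl\'e genus expression), and the reduction to showing $\frac{L-g}{L}\cdot\frac{\log(L-g)}{\log L}\to 1$ is carried out correctly. But the step you defer as the ``technical crux'' is not merely unfinished --- under the stated hypothesis of a \emph{fixed} field $\fgq$ it cannot be completed. Retaining the higher-order $\theta$-dependence does not help: since $\sqrt{q}^{\ell}<L\le\sqrt{q}^{\ell+1}$ forces $\ell\sim 2\log L/\log q\to\infty$, we get $\theta\in[\pi/(\ell+1),\pi/\ell)$ and hence $1-\cos\theta=O\bigl(1/(\log L)^2\bigr)$, a vanishing perturbation of the Oesterl\'e formula. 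Consequently $g/L\to 1/(\sqrt{q}-1)$ (consistent with Drinfeld--Vl\u{a}du\c{t}), so $(L-g)/L\to(\sqrt{q}-2)/(\sqrt{q}-1)<1$, and feeding this back through your Stirling estimate gives $E(G_L)\to(\sqrt{q}-2)/(\sqrt{q}-1)$, which is strictly below $1$ for every fixed $q$ (e.g.\ $2/3$ for $q=16$). The only way to push the limit to $1$ is your second suggested regime, letting $q$ grow with $L$, but that is exactly what the proposition's ``fixed field'' wording forbids; so your correctly identified obstacle is a genuine gap, not a routine computation left to the reader.

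It is worth noting that the paper's own proof runs into the same point and resolves it only informally: it writes $L-g\simeq a(q)L$ ``for some constant $a(q)$ close to $1$ for large $q$'' and then concludes with ``$\to a(q)\to 1$'', i.e.\ it silently takes a second limit in $q$ after the limit in $L$. What a careful version of your argument (or the paper's) actually proves is $E(G_L)\to a(q)=(\sqrt{q}-2)/(\sqrt{q}-1)$ as $L\to\infty$ for fixed $q$, with $a(q)\to 1$ only as $q\to\infty$ --- which is indeed what happens along the Reed--Solomon/Hermitian/Suzuki families, where $g/L\to 0$ because $q$ grows with $L$. If you finish your write-up, state that two-step (or joint) limit explicitly rather than the single-limit claim; as literally stated for a fixed $\fgq$, the limit $1$ is not attained by the Oesterl\'e-optimal genus assumption.
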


\begin{proof}
 Recall Stirling's formula \cite{Stirling}
\[
 \lim_{n \to \infty} \frac{n!}{\sqrt{2\pi n}\left( \frac{n}{e} \right )^n} = 1
\]
which is equivalent to 
\[
 n! \simeq \sqrt{2 \pi n}\left( \frac{n}{e} \right )^n.
\]
Therefore, we get the estimate
\[
 \log (n!) = n \log n -n + O(\log n), \quad \text{ where } O(\log n)\simeq \frac{1}{2}\log (2\pi n).
\]
Now we have
\begin{align*}
E(G_L)&=\frac{\log_q ((L-g)!}{L\log_q L}\\
      &=\frac{1}{L\log L}\left [(L-g)\log(L-g)-(L-g)+O(\log(L-g))\right ]\\
      &= \frac{L-g}{L} \cdot \frac{\log(L-g)}{\log(L)} - \frac{L-g}{L\log(L)} + \frac{1}{2} \cdot \frac{\log(2\pi (L-g))}{L\log(L)}.
\end{align*}
Recall that 
\[
 g=\frac{(L-1)\sqrt{q}\cos \theta+q-L}{q+1-2\sqrt{q}\cos \theta}= \frac{L(\sqrt{q}\cos \theta -1)-\sqrt{q}\cos \theta+q}{q+1-2\sqrt{q}\cos \theta}.
\]
Then,
\begin{align*}
 L-g &= L\left [1- \frac{(\sqrt{q}\cos \theta-1)-\frac{\sqrt{q}\cos \theta}{L}+\frac{q}{L}}{q+1-2\sqrt{q}\cos \theta}\right]\\
    &\simeq L \left[1- \frac{\sqrt{q}\cos \theta -1}{q+1-2\sqrt{q}\cos \theta}\right] \simeq L\left[\frac{q+2-3\sqrt{q}\cos \theta}{q+1-2\sqrt{q}\cos \theta}\right]\\
    &\simeq a(q)L,
\end{align*}
for some constant $a(q)$ close to $1$ for large $q$. Therefore, we have that 
\begin{align*}
& \frac{L-g}{L \log (L)}\simeq \frac{a(q)}{\log (L)}\to 0 &&\text{ as } L\to \infty, \\
& \frac{\log(2\pi(L-g))}{L\log (L)}\simeq \frac{\log (2\pi a(q)L)}{L\log (L)}\to 0 &&\text{ as } L \to \infty, \\
& \frac{L-g}{L}\cdot \frac{\log(L-g)}{\log(L)}\simeq \frac{a(q)L}{L}\cdot \frac{\log(a(q)L)}{\log(L)} \to a(q)\to 1 &&\text{ as } L \to \infty.
\end{align*}
Therefore,
\begin{align*}
 E(G_L)&\simeq \underbrace{\frac{L-g}{L}\frac{\log(L-g)}{\log(L)}}_{\to 1}-\underbrace{\frac{L-g}{L\log(L)}}_{\to 0}+\underbrace{\frac{1}{2}\frac{\log(2\pi (L-g))}{L\log(L)}}_{\to 0}.\\
       &\to 1.
\end{align*}
as $L \to \infty$
\end{proof}



\section{Fixing the Parameters}\label{sec4:4}

We have seen in the previous section that asymptotically $E(G_L)\to 1$ as $L \to \infty$. In this section we would like to study numerically\footnote{All the computations in this section are performed by Mathematica \cite{Mathematica} } the asymptotic behavior of different algebraic geometry codes. In order to compare their performance, we need these codes to be defined over a common field which is the binary field $\fg{2}$. For that, we use the concept of concatenation that was first introduced by Forney \cite{Forney11} as a technique to obtain new codes over the binary field from codes over a field extension of $\fg{2}$, see Section~\ref{sec4:4.1}.

We keep the same assumption about $G$ as in Section \ref{sec4:3}, i.e., $G$ is the generating matrix of an algebraic geometry code constructed using an algebraic curve of genus $g$ with $L$ affine rational points and profile $D_i=\max(L-g+1-i,1)$ $(i=1,\dots,L))$. As in Section \ref{sec4:4.1}, the exponent of $G$ and $G_2$ are related by the inequality
\begin{equation}\label{eq4:4.2}
E(G_2) \geq \frac{\log_2L}{\log_2(mL)}\cdot E(G).
\end{equation}
Therefore, using Proposition \ref{prop4:3.1}, $E(G_{L,2}) \to 1$ as $L \to \infty$.

We would like to study the asymptotic behavior of $E(G_2)$ by studying the lower bound in Inequality \eqref{eq4:4.2}. Set 
\begin{align*}
 E_2(G):&= \frac{\log_2 L}{\log_2(mL)}\cdot E(G)= \frac{\log_2(L)}{\log_2 (mL)}\cdot \frac{1}{L \log_2 (L)}\cdot \log_2((L-g)!)\\
        &= \frac{1}{L\log_2(mL)} \log_2 ((L-g)!)
\end{align*}
which is a function in $m$ (field size), $L$ (number of affine rational points), and $\theta$ (algebraic curve) \eqref{eq:og}. We denote the binary block length after concatenation by $n$, i.e., $n:=mL$ and we will regard $E_2(G)$ as a function on $n$, $m$, and $\theta$.

\subsection{Fixing the Binary Block Size $n$}\label{sec4:4.2.1}
Let $n$ be a fixed binary block size. Then, $L=L(m)=\sfrac{n}{m}$ and $\theta=\theta(m)$ can be found using \eqref{eq:otheta}. This means that $E_2(G)$ can be written in term of $m$. Numerically using Mathematica, enumerating over all positive integers $m$ shows that $E_2(G)$ has a local maximum, e.g., see Figure \ref{fixn} and Table \ref{table3220} for binary block size $n=3\cdot 2^{20}$.

\begin{figure}[htb!]
  \centering
  \includegraphics[height=30cm,width=10.0cm,keepaspectratio]{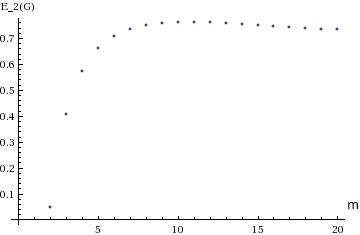}
  \caption[fixn]
  {Fixing $n=3\cdot 2^{20}$ and enumerating over all $m$, we have a local maximum at $m$=12 which correspond to the Hermitian curve.}\label{fixn}
\end{figure}

\begin{table}[htb!]
  \begin{center}
     \begin{tabular}[htb!]{ | c || c | c | c | c | }
      \hline
      $m$   & $q$ & $L$ & $g$ & $E_2(G)$ \\ \hline  
      2   &   2  &      1572864 &   $1.46820*10^6$  &  0.046959\\ \hline
      3  &     8   &     1048576  &  524647.    &     0.406401\\ \hline	
      4   &   16    &   786432   &  233948.      &   0.573893\\ \hline	
      6   &   64    &   524288   &	   62517.2 &        0.708937\\ \hline	
      8   &   256   &   393216   &  19901.6   &      0.750686\\ \hline	
      12  &   $2^{12}$  &   262144   &  2016.00   &      0.760667\\ \hline	
      16  &   $2^{16}$  &   196608   &  256.000   &      0.746789\\ \hline	
      24  &   $2^{24}$  &   131072   &      0     &      0.720751\\ \hline	
      32  &   $2^{32}$  &   98304    &      0     &      0.701524\\ \hline	
    \end{tabular}
    \caption{The values of $E_2(G)$ for $n=3\cdot 2^{20}$, where the maximum is at $m=12$ which corresponds to the Hermitian curve.}\label{table3220}
  \end{center}
\end{table}
In Section \ref{sec4:5}, numerically as $n$ gets larger, this local maximum corresponds first to the Reed-Solomon code for a small range of $n$, then it corresponds to the Hermitian code and as $n$ gets larger, the local maximum corresponds to the Suzuki code (see Figure \ref{fig:RSHS2}).

\subsection{Fixing the Algebraic Curve $\theta$}\label{sec4:4.2.2}
Let $\theta$ be fixed, i.e., the underlying algebraic curve is fixed. Then, $L=L(m)$ can be found in terms of $m$ as follows. Find $\ell:=\ell(\theta)=\left \lfloor \sfrac{\pi}{\theta} \right \rfloor$, using $\eqref{eq:otheta}$ we have
\[
 u=-\frac{\cos \left ( \left (\frac{\ell+1}{2} \right )\theta \right )}{\cos \left ( \left (\frac{\ell-1}{2} \right )\theta \right )}.
\]
Then, using \eqref{eq:ou}, we get 
\[
 L:=L(m)=\frac{\sqrt{q}^{\ell+1}-u\sqrt{q}^{\ell}}{u\sqrt{q}+1}=\frac{\sqrt{2^{m}}^{\ell+1}-u\sqrt{2^{m}}^{\ell}}{u\sqrt{2^{m}}+1}.
\]
Therefore, $E_2(G)$ is a function of $m$. In this case it turns out that $E_2(G)$ is an increasing function (see Figure \ref{fixthetafreem}). This means that \emph{more concatenation gives larger exponents}.
\begin{figure}[htb!]
  \centering
  \includegraphics[height=30cm,width=10.0cm,keepaspectratio]{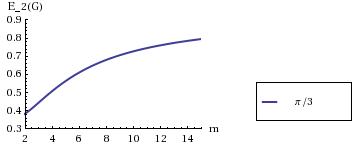}
  
  \caption[fixthetafreem]
  {Fixing $\theta=\sfrac{\pi}{3}$ and enumerate over all $m$.}\label{fixthetafreem}
\end{figure}

\subsection{Fixing the Field Extension $\fg{2^m}$}\label{sec4:4.2.3}
Let $m$ be fixed. Using \eqref{eq:otheta}, $\theta=\theta(L)$ and so $E_2(G)$ can be regarded as a function of $L$. Numerically $E_2(G)$ is an increasing function in $L$ with limit that tends to $1$ as $L \to \infty$ (see Figure \ref{fixmfreel}). This means again that \emph{more geometry gives larger exponents}.

\begin{figure}[htb!]
  \centering
  \includegraphics[height=30cm,width=10.0cm,keepaspectratio]{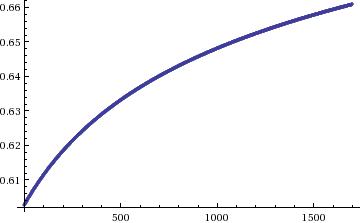}
  
  \caption[fixmfreel]
  {The exponent $E_2(G)$ if we fix $m=8$ and enumerate on $L$.}\label{fixmfreel}
\end{figure}

\begin{remark}
 From the two analyses above, we see that more concatenation gives larger exponents and similarly, more geometry also gives larger exponents. It is the question about the tradeoff between concatenation and geometry in how to efficiently approach the limit, i.e., $E(G)\to 1$. In Section \ref{sec4:5}, we will see that numerically for the three codes (Reed-Solomon, Hermitian, and Suzuki codes) more geometry is preferable as the block size increases.
\end{remark}

\section{A Comparison between the three Curves}\label{sec4:5}
In this section we will use Mathematica \cite{Mathematica} to compare the performance of the three binary concatenated codes which are the Reed-Solomon code constructed from the projective line with $\theta=\sfrac{\pi}{2}$ (Example \ref{sec4:ex3.1}), the Hermitian code constructed from the Hermitian curve with $\theta=\sfrac{2\pi}{3}$ (Example \ref{sec4:ex3.2}), and the Suzuki code constructed from the Suzuki curve with $\theta=\sfrac{3\pi}{4}$ (Example \ref{sec4:ex3.3}). The two applications that will be considered in this section are the comparison between these codes as suitable kernels for channel polarization and as suitable codes for error-correction.

\subsection{A Comparison for Channel Polarization}\label{sec4:5.1}
As the rate of polarization is determined based on the exponent of the kernel, we compare numerically the three codes above as suitable kernels for channel polarization. Let $n$ be the binary block size, we draw a graph between the values of $n$ and the corresponding values of $E_2(G)$ for the three codes (see Figure \ref{comparsioneg}). 
\begin{figure}[htb!]
  \centering
  \includegraphics[height=30cm,width=10.0cm,keepaspectratio]{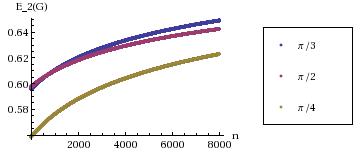}
  \caption[fixthetafreem]
  {The exponent $E_2(G)$ using $\theta=\sfrac{\pi}{3},\sfrac{\pi}{4},\sfrac{\pi}{2}$ and enumerating over the binary block size $n$.  We notice when $n=10000$, the three curves from top to bottom will correspond to $\theta=\sfrac{\pi}{3},\sfrac{\pi}{2},\sfrac{\pi}{4}$. For $n\geq 400,000$, the Suzuki code will be in the top, then the Hermitian code and finally the Reed-Solomon code}\label{comparsioneg}
\end{figure}
From the graph we notice that as $n$ gets larger, the maximum $E_2(G)$ is attained first by the Reed-Solomon codes for values of $n<1800$, after that as $n$ gets larger, by the Hermitian codes up to $n=399,212$ and after that by the Suzuki codes. This can be generalized to any $\theta=\sfrac{k\pi}{\ell}$, i.e., more geometry yields larger exponents as $n \to \infty$. We summarize this result in Figure \ref{fig:RSHS2}.


\begin{figure}[htb!]
\centering

\begin{tikzpicture}
  [scale=1,auto=left, minimum size=5em]

 
 \draw [<-] (0.5,4) --  (0.5,1); 
 \draw [->] (0.5,1) -- (8,1);

  \draw  (3.5,2) --  (3.5,3); 
  \draw  (5.5,2) --  (5.5,3);
  
 \node [align=center] (1) at (2.0,2.5) {Reed-Solomon};
 \node [] (2) at (4.5,2.5) {Hermitian};
 \node [align=center] (3) at (6.3,2.5) {Suzuki};

  \node [align=center] (1) at (3.5,0.7) {$1,800$}; 
   \node [align=center] (1) at (5.5,0.7) {$399,212$};

 \node [align=center] (1) at (8.3,1) {$n$}; 
 \node [align=center] (1) at (-0.5,3.75) { Maximum\\ $E_2(G)$}; 
  
\end{tikzpicture}
\caption{ The maximum exponent for a given $n$  is obtained with the indicated choice of curve.}\label{fig:RSHS2}
\end{figure}
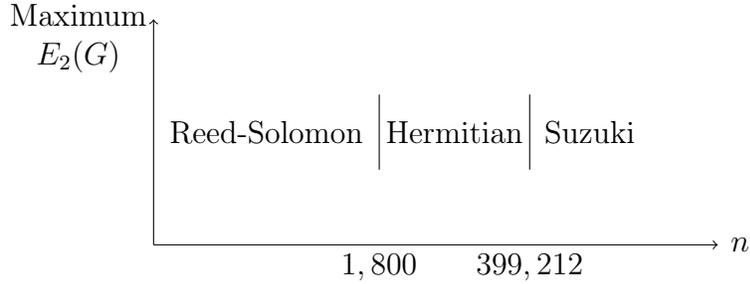

\subsection{A Comparison for Error-Correction}\label{sec4:5.2}
In this section we compare the performance of the three codes above as error-correcting codes with fixed binary rate. We draw a graph between the binary block size $n$ and $\sfrac{(k+d)}{n}$, where $n$, $k$, and $d$ are the parameters of the binary concatenated code with fixed binary rate.

For that, let $C$ be an algebraic geometry code with parameters $(L,K,D)$ over $\fg{2^m}$ constructed from the algebraic curve of genus $g$ with $L$ affine rational points. Then, we have 
\[
 L-g+1 \leq K+D < L+1 \Rightarrow \frac{L-g+1}{L} \leq \frac{K+D}{L}<\frac{L+1}{L}.
\]
Assume $\sfrac{(K+D)}{L}=\sfrac{(L-g+1)}{L}$ \cite[Corollary 4.1.14, p. 196]{TVN}. Then, the concatenated binary code has parameters $(mL,mK,d') (d'\geq D)$. Therefore,
\[
\frac{mK+d'}{mL}\geq \frac{mK+D}{mL}=\underbrace{\frac{(m-1)}{m}}_{\text{known}} \cdot \underbrace{\frac{K}{L}}_{\text{Rate}} + \underbrace{\frac{K+D}{mL}}_{\text{known}} =\frac{m-1}{m}\frac{K}{L}+\frac{L-g+1}{mL}.
\]
Therefore, we can draw the graph between $n$ and $\sfrac{(k+d)}{n}$ as in Figure \ref{comparsionag}.

\begin{figure}[htb!]
  \centering
  \includegraphics[height=30cm,width=10.0cm,keepaspectratio]{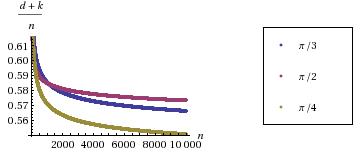}
  
  \caption[fixthetafreem]
  {The bound $\sfrac{(k+d)}{n}$ using $\theta=\sfrac{\pi}{3},\sfrac{\pi}{4},\sfrac{\pi}{2}$ when the binary rate $R_2:=\sfrac{1}{2}$ and enumerating over the binary block size $n$. We notice when $n=10000$, the three curves from top to bottom will be corresponding to $\theta=\sfrac{\pi}{2},\sfrac{\pi}{3},\sfrac{\pi}{4}$.}\label{comparsionag}
\end{figure}

Figure \ref{comparsionag} shows that as $n \to \infty$, $\sfrac{(k+d)}{n} \to R=\sfrac{1}{2}$ and for the first few values of $n$, the Reed-Solomon code is the closest code to the line $y=\sfrac{1}{2}$, then comes the Hermitian code to be the closest and as $n$ gets larger, the Suzuki code is the code that is closest to the line $y=\sfrac{1}{2}$, so we have once again that more geometry is preferable as the block size increases.

\bibliographystyle{amsplain}
\bibliography{mypaper,mypaper2,mypaper3,mypaper0}

\end{document}